\newcommand{\carka}{,\penalty0\relax}
\newcommand{\zlom}{\penalty0\relax}
\newtheorem{theorem}{Theorem}
\newtheorem{lemma}[theorem]{Lemma}
\newtheorem{proposition}[theorem]{Proposition}
\newtheorem{corollary}[theorem]{Corollary}
\def\email#1{\href{mailto:#1}{\texttt{#1}}}
\def\NP{\ensuremath{\mathcal{NP}}}
\def\Z{\ensuremath{\mathbb{Z}}}
\def\N{\ensuremath{\mathbb{N}}}
\title{Packing Chromatic Number of Distance Graphs}
\author{Jan Ekstein$^{*}$ ~~~ P\v{r}emysl Holub\thanks{University of West Bohemia, Pilsen, Czech Republic,\newline
      e-mail: \email{$\lbrace$ekstein, holubpre$\rbrace$@kma.zcu.cz}, supported by grants 1M0545,
      MEB 101014 of the Czech Ministry of Education.}\and
      Bernard Lidick\'y\thanks{Charles University, Prague, Czech Republic, \newline
      e-mail: \email{bernard@kam.mff.cuni.cz}\carka
      ~supported by GAUK 60310, GACR~201/09/0197 of~the Czech Ministry of Education.}}
\date{\today}
\begin{document}

\maketitle

\begin{abstract}
  The packing chromatic number $\chi_{\rho}(G)$ of a graph $G$ is the smallest integer $k$ such that vertices of $G$ can
  be partitioned into disjoint classes $X_{1}, ..., X_{k}$ where vertices in $X_{i}$ have pairwise distance greater than
  $i$. We study the packing chromatic number of infinite distance graphs $G(\Z, D)$, i.e. graphs with the set $\Z$ of
  integers as vertex set and in which two distinct vertices $i, j \in \Z$ are adjacent if and only if $|i - j| \in D$.

  In this paper we focus on distance graphs with $D = \{1, t\}$. We improve some results of Togni who initiated the
  study. It is shown that $\chi_{\rho}(G(\Z, D)) \leq 35$ for sufficiently large odd $t$ and
  $\chi_{\rho}(G(\Z, D)) \leq 56$ for sufficiently large even $t$. We also give a lower bound 12 for $t \geq 9$ and
  tighten several gaps for $\chi_{\rho}(G(\Z, D))$ with small $t$.
  \\
  {\bf Keywords:} distance graph; packing coloring; packing chromatic \zlom number
   \\
  {\bf AMS Subject Classification (2010):} 05C12, 05C15
\end{abstract}

\section{Introduction}
 In this paper we consider simple undirected graphs only. For terminology and notations not defined here we refer to
 \cite{Bon}. Let $G$ be a connected graph and let $\mbox{dist}_{G}(u, v)$ denote the distance between vertices $u$ and
 $v$ in $G$. We ask for a partition of the vertex set of $G$ into disjoint classes $X_{1}, ..., X_{k}$ according to the
 following constraints. Each color class $X_{i}$ should be an \emph{$i$-packing}, a set of vertices with property that
 any distinct pair $u, v \in X_{i}$ satisfies $\mbox{dist}_{G}(u, v) > i$. Such a partition is called a
 \emph{packing $k$-coloring}, even though it is allowed that some sets $X_{i}$ may be empty. The smallest integer $k$
 for which there exists a packing $k$-coloring of $G$ is called the \emph{packing chromatic number} of $G$ and it is
 denoted $\chi_{\rho}(G)$. The very first results about packing chromatic number were obtained by
 Slopper~\cite{bib-sloper04}. He studied an \emph{eccentric coloring} but his results were directly translated to the
 packing chromatic number. The concept of packing chromatic number was introduced by Goddard et al. \cite{God} under the
 name \emph{broadcast chromatic number}.
 The term packing chromatic number was later proposed by Bre\v{s}ar et al.~\cite{Bre}. The determination of the packing
 chromatic number is computationally difficult. It was shown to be \NP-complete for general graphs in~\cite{God}. Fiala
 and Golovach \cite{FiaGol} showed that the problem remains \NP-complete even for trees.

 The research of the packing chromatic number was driven by investigating $\chi_{\rho}(\Z^2)$ where $\Z^2$ is the
 Cartesian product of two infinite paths - the (2-dimensional) \emph{square lattice}. Goddard et al. \cite{God} showed
 that $9 \leq \chi_{\rho}(\Z^2) \leq 23$. Fiala et al. \cite{bib-fiala09+} improved  the lower bound to 10 and Holub
 and Soukal \cite{HolSo} improved the upper bound to 17. The lower bound was pushed further to 12 by Ekstein et
 al.~\cite{bib-ekstein10+}. For  $\mathbb{Z}^{3}$ see~\cite{bib-fiala09+,bib-finbow07+}.

 Let $D = \{d_{1}, d_{2}, ...\carka d_{k}\}$, where $d_{i}$ are positive integers and $i = 1, 2, ..., k$. The (infinite)
 \emph{distance graph} $G(\mathbb{Z} ,D)$ with distance set $D$ has the set $\mathbb{Z}$ of integers as a vertex set and
 in which two distinct vertices $i, j \in \mathbb{Z}$ are adjacent if and only if $|i - j| \in D$. We denote the graph
 $G(\mathbb{Z}, \{a, b\})$ by $D(a, b)$. The study of a coloring of distance graphs was initiated by Eggleton et al.
 \cite{Eggl}. In last twenty years there were more than 60 papers concerning this topic. We recall e.g. contributions by
 Voigt and Walter \cite{VoiWal}, Ruzsa et al. \cite{RuzTuVoi}, Liu \cite{Liu}, Liu and Zhu \cite{LiuZhu} and Barajas and
 Serra \cite{BaSe}.

 \begin{table}[ht]
 \begin{center}
  \begin{tabular}{|c|c|c|c|}
   \hline
   D & $\chi_{\rho} \geq$ & $\chi_{\rho} \leq$ \\
   \hline
   1,2 & 8       & 8 \\ 
   1,3 & 9       & 9 \\ 
   1,4 & 11   &  16 \\ 
   1,5 & 10   & 12 \\ 
   1,6 & 11   & 23 \\ 
   1,7 & 10   & 15 \\ 
   1,8 & 11   & 25 \\ 
   1,9 & 10   & 18 \\ 
   \hline
   \end{tabular}
 \hskip 1em
   \begin{tabular}{|c|c|c|c|c|}
   \hline
   D &  $\chi_{\rho} \geq$  &$\chi_{\rho} \leq$ \\ 
   \hline
   1,2 &  8        &  8      \\
   1,3 &  9        &  9      \\
   1,4 &  {\bf 14} & {\bf 15}\\
   1,5 &  {\bf 12} &  12     \\
   1,6 &  {\bf 15} &  23     \\
   1,7 &  {\bf 14} &  15     \\
   1,8 &  {\bf 15} &  25     \\
   1,9 &  {\bf 13} &  18     \\
   \hline
  \end{tabular}
   \caption{Lower and upper bounds for the packing chromatic number of $D(1, t)$.
   Left table contains previously known bounds and the right table contains current
   bounds.}
   \label{Tabulka}
   \end{center}
 \end{table}

 The study of a packing coloring of distance graphs was initiated by Togni~\cite{Togni}. Results for $D(1, t)$ for small
 values of $t$, obtained by Togni~\cite{Togni}, are summarized in the left part of Table~\ref{Tabulka}. Our improvements
 are \zlom emphasized in the right part of the table and they were obtained by a computer. We wrote two independent
 programs (one in Pascal and other one in C++). The source codes and the outputs of the programs can be downloaded from
 \url{http://kam.mff.cuni.cz/~bernard/dist}.

 \vskip 1.5cm

 For larger $t$ Togni proved the following theorem.

 \begin{theorem}\emph{\textbf{\cite{Togni}}}
  \label{PakovDist} For every $q,t \in \N$:
  $$\chi_{\rho}(D(1, t)) \leq \left\{%
  \begin{array}{ll}
    86  & \mathrm{if~} t = 2q + 1, q \geq 36, \\
    40  & \mathrm{if~} t = 2q + 1, q \geq 223, \\
    173 & \mathrm{if~} t = 2q, q \geq 87, \\
    81  & \mathrm{if~} t = 2q, q \geq 224, \\
    29  & \mathrm{if~} t = 96q \pm 1, q \geq 1, \\
    59  & \mathrm{if~} t = 96q +1 \pm 1, q \geq 1. \\
  \end{array}%
  \right.    $$
 \end{theorem}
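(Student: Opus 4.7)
The plan is to construct, for each regime of $t$ in the statement, an explicit periodic packing coloring of $\Z$ whose period and palette depend on the parity of $t$ and (in the last two cases) on $t \bmod 96$. The basic tool is the distance formula in $D(1,t)$: for $u<v$,
\[
\mathrm{dist}_{D(1,t)}(u,v) \;=\; \min_{k \in \Z} \bigl(|k| + |(v-u) - kt|\bigr),
\]
i.e.\ we minimize over the signed number of $t$-edges used plus residual unit steps. A color class $X_i$ is then an $i$-packing exactly when every pair $u<v$ in $X_i$ and every $k \in \Z$ satisfy $|k| + |(v-u)-kt| > i$. This single inequality drives every case.

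For odd $t$, every edge of $D(1,t)$ flips parity, so the even integers form an independent set and may carry colour $1$ for free. The odd integers are then coloured periodically: for each $i \geq 2$, $X_i$ is a union of arithmetic progressions inside the odd integers, chosen both to satisfy the native spacing $>i$ and to kill small-$k$ shortcuts through $t$-edges. The threshold $q \geq 36$ yields the bound $86$ with a relatively coarse pattern; $q \geq 223$ allows a finer analysis in which only $k \in \{-1,0,1\}$ can threaten low-$i$ classes, leaving high colours essentially unconstrained by $t$ and giving $40$. For even $t$ the parity trick fails: colour $1$ must occupy a much sparser periodic set, effectively doubling the construction and yielding $173$ and $81$.

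For the bounds $29$ and $59$, which apply when $t$ is within $1$ of a multiple of $96$, I would start from a known short-period packing coloring of the infinite path whose period divides $96$. When $t \equiv \pm 1 \pmod{96}$, a $t$-jump shifts position by $\pm 1 \pmod{96}$, so the $t$-edges behave essentially like displaced unit edges and the path coloring needs only local adjustment, giving $29$. When $t \equiv 0, 2 \pmod{96}$, the $t$-edges connect same or nearby residues mod $96$ and typically force colour clashes; a doubled period together with a palette of $59$ suffices to separate the offending pairs.

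The principal obstacle will be verifying the inequality $|k| + |(v-u)-kt| > i$ uniformly over \emph{all} pairs $u,v \in X_i$ and \emph{all} $k \in \Z$. Because each $X_i$ is an arithmetic progression with some common difference $d_i$, fixing $k$ reduces the condition to checking the smallest positive $v-u$ in that progression compatible with the residue of $kt$ modulo $d_i$, producing a finite but delicate Diophantine case analysis whose tightness at small $|k|$ is precisely what forces the numeric thresholds $q \geq 36, 87, 223, 224$ appearing in the theorem.
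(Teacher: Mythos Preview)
This theorem is not proved in the present paper at all: it is quoted verbatim from Togni~\cite{Togni} and then \emph{superseded} by Theorem~\ref{DistanceGraph}, whose proof the paper does give. So there is no ``paper's own proof'' of Theorem~\ref{PakovDist} to compare your proposal against.

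That said, two remarks are in order. First, what you have written is a \emph{plan}, not a proof: you correctly identify the bipartition of $D(1,t)$ for odd $t$ (both $1$ and $t$ are odd, so edges flip parity and the even integers carry colour~$1$), and you have the right distance formula, but you never exhibit the actual periodic patterns, never derive the specific palette sizes $86,40,173,81,29,59$, and never justify the thresholds $q\ge 36,223,87,224$. The phrase ``a known short-period packing coloring of the infinite path whose period divides $96$'' is doing a lot of unexplained work; the infinite path itself has packing chromatic number~$3$, so what you presumably need is a pattern for a two-dimensional object (a strip, or the square lattice) of period $96$, and you have not said what it is or why it exists.

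Second, the method this paper uses for its own Theorem~\ref{DistanceGraph} is structurally different from your number-theoretic sketch. The paper views $D(1,t)$ as a spiral, slices it into \emph{bands} $B_i$ (the fibres mod~$t$) and width-$24$ \emph{strips}, tiles each strip with the Holub--Soukal $24\times24$ square-lattice pattern using colours $1,\dots,17$, and handles the leftover bands with colours $18$ and up via Lemma~\ref{lem-band} and Proposition~\ref{Goddard}. Togni's original argument is in the same spirit. Your approach --- placing each colour class as an arithmetic progression in $\Z$ and checking the inequality $|k|+|(v-u)-kt|>i$ for all $k$ --- is in principle workable but would require you to manufacture, colour by colour, progressions whose common differences simultaneously avoid all small-$k$ shortcuts; you have not shown how to do this, and the paper's geometric decomposition sidesteps that Diophantine bookkeeping entirely.
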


 We improve some results of Theorem \ref{PakovDist} as follows.

 \begin{theorem}
  \label{DistanceGraph}
  For any odd integer $t \geq 575$, $$\chi_{\rho}(D(1, t)) \leq 35.$$
  For any even integer $t \geq 648$, $$\chi_{\rho}(D(1, t)) \leq 56.$$
 \end{theorem}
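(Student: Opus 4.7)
The plan is to produce, for each parity of $t$, an explicit periodic coloring $\phi:\Z\to\{1,\ldots,K\}$ (with $K=35$ in the odd case and $K=56$ in the even case) and verify it is a packing coloring of $D(1,t)$ whenever $t$ is large enough and of the requisite parity.

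For odd $t$ I would exploit the fact that $D(1,t)$ is bipartite: both generators $1$ and $t$ flip parity, so evens and odds form the bipartition. This allows color $1$ to be placed on every even integer, covering half of $\Z$ for free with no interference from $t$-edges. The odd integers then need to be partitioned into $i$-packings for $i\in\{2,\ldots,35\}$. I would lay a short periodic skeleton using colors from a small initial segment $\{2,3,4,5\}$ on the odd residues, each assigned to an arithmetic progression whose common difference exceeds its color, and then fill the remaining residues sparingly with the larger colors $6,\ldots,35$. For even $t$ the bipartition argument breaks, since $t$-edges preserve parity and so color $1$ cannot occupy a full parity class; a longer and more redundant low-color skeleton is needed, which explains why the bound jumps to $56$.

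Once a candidate pattern of period $p$ is fixed, verification reduces to a finite computation via the identity
\[
d(i,j)=\min_{k\in\Z}\bigl(|k|+|j-i-kt|\bigr).
\]
Two positions of color $c$ at linear offset $\delta$ conflict iff $\delta=kt+r$ with $|k|+|r|\leq c$ for some integer $k$. The case $k=0$ is a purely local constraint, checkable within one period of $\phi$. For $k\neq 0$ one requires $\delta\bmod p$ to avoid the ``windows'' $\{kt+r\bmod p:|r|\leq c-|k|\}$ of width $\leq 2c$; once $t$ exceeds the thresholds $575$ and $648$ respectively, these windows around distinct multiples of $t$ remain pairwise disjoint and disjoint from the central window at $k=0$, so the verification collapses to a finite case analysis over $k\in\{\pm 1,\ldots,\pm c\}$ and over pairs of same-colored positions inside one period.

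The main obstacle is producing the pattern itself. Fitting into $35$ or $56$ colors is tight, and I would expect the patterns to be found by computer search over periodic colorings of moderate length, in the same spirit as the computer-assisted entries of Table~\ref{Tabulka}. Once such a pattern is exhibited, correctness can be made uniform in $t$: the relative size of the forbidden windows around $\pm kt$ shrinks as $t$ grows, so only finitely many residue classes of $t\bmod p$ must be inspected, and the hypotheses $t\geq 575$ (odd) and $t\geq 648$ (even) are exactly what is needed to absorb every remaining case.
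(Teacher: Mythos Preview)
Your proposal has a genuine gap: a single periodic pattern $\phi:\Z\to\{1,\ldots,K\}$ with a \emph{fixed} period $p$ cannot be a packing coloring of $D(1,t)$ uniformly for all sufficiently large $t$ of a given parity. The obstruction lies precisely in the ``window'' analysis you sketch. Suppose color $c\ge 2$ occurs at some position $a$; then it also occurs at every $a+mp$. For any $t$, writing $t\equiv r\pmod p$ with $0\le r<p$, there is an $m$ with $mp=t-r$, and then
\[
d(a,a+mp)\le 1+|mp-t|=1+r.
\]
Hence whenever $r\le c-1$ two $c$-colored vertices are at distance at most $c$. Since the largest color is $K\ge 35$, every residue $r\in\{0,1,\ldots,K-1\}$ (and symmetrically $r\in\{p-K+1,\ldots,p-1\}$) is fatal, yet infinitely many odd $t\ge 575$ fall into these residues (for instance all odd $t\equiv 1\pmod p$). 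Your claim that ``the hypotheses $t\ge 575$ and $t\ge 648$ absorb every remaining case'' is therefore false: the conflict structure is governed by $t\bmod p$, not by the magnitude of $t$, and no single period $p$ can work for all large $t$ simultaneously.

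The paper's argument is structurally different: the coloring is built for each $t$ separately and its period scales with $t$. One views $D(1,t)$ as a spiral on $t$ columns (bands), groups $24$ consecutive bands into a ``strip'', and tiles each strip with the $24\times 24$ pattern that witnesses $\chi_\rho(\Z^2)\le 17$; the leftover single bands are colored with $\{1,18,\ldots,35\}$ via a short one-dimensional pattern (Lemmas~\ref{lem-band} and~\ref{lem-2bands}), and in the even case one exceptional band needs the extra range $\{18,\ldots,56\}$ supplied by Proposition~\ref{Goddard}. The thresholds $575$ and $648$ do not come from any window-separation argument but from the arithmetic requirement that there be at least as many strips as leftover bands, so that consecutive bands can be kept $25$ columns apart. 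Your bipartiteness observation for odd $t$ is correct and is already implicit in the chessboard placement of color $1$ inside the $24\times 24$ tile, but the remainder of the construction must depend on $t$.
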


 We also give a lower bound for the packing chromatic number of $D(1, t)$ for $t \geq 9$, as a corollary of the
 following statement.

 \begin{theorem}\emph{\textbf{\cite{bib-ekstein10+}}}
 \label{CtvercovaMrizka}
  The packing chromatic number of the square lattice is at least 12.
 \end{theorem}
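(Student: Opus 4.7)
The plan is to argue by contradiction: suppose $\mathbb{Z}^2$ admits a packing coloring with $11$ colours, and derive that a sufficiently large but finite subgraph of $\mathbb{Z}^2$ cannot be packing-coloured with those $11$ colours, contradicting the assumption.

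The starting point is a density estimate. For each $i$, let $d_i^{*}$ denote the maximum density (a limsup over a growing sequence of balls) of an $i$-packing in $\mathbb{Z}^2$. A standard sphere-packing argument yields $d_1^{*}=1/2$, $d_2^{*}=1/5$, and $d_i^{*}\le 1/(2r^2+2r+1)$ for $i=2r\ge 2$, with an analogous bound for odd $i$. For any packing $k$-colouring we must have $\sum_{i=1}^{k} d_i^{*} \ge 1$, so the first thing I would do is check whether $\sum_{i=1}^{11} d_i^{*}<1$; I expect the margin to be very narrow, so pure density is insufficient and the argument must be refined.

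To close the gap I would use the structural rigidity of the densest colour classes. Any $1$-packing whose density is close to $1/2$ must agree, up to a bounded perturbation, with one of the two bipartition classes of $\mathbb{Z}^2$; similarly, dense $2$-packings are forced into a few canonical periodic patterns. Fixing $X_1$ and $X_2$ to such extremal shapes removes a rigid sublattice from consideration, and leaves only a sparse residual set of vertices to be partitioned among $X_3,\ldots,X_{11}$, whose aggregate density is too small to cover the residual.

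The main step is then a finite verification. I would select a window of $\mathbb{Z}^2$ large enough that all packing constraints among colours $3,\ldots,11$ (which have influence radius at most $11$) are witnessed within it, enumerate the few canonical configurations of $X_1\cup X_2$ on the window, and check via a SAT solver (or exhaustive backtracking with strong pruning) that no completion to an $11$-colouring exists. The principal difficulty — and where computer assistance is essential — is calibrating the window: it must be large enough that boundary effects cannot permit a near-extremal configuration to sneak past the contradiction, yet small enough that the exhaustive search terminates. Carrying out this finite verification cleanly is the main obstacle.
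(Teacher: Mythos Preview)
This theorem is not proved in the present paper; it is quoted from \cite{bib-ekstein10+}. The only information the paper gives about the proof is in the argument for Corollary~\ref{Dolni mez}: the result is obtained by showing, via exhaustive computer search, that the finite $15\times 9$ grid admits no packing $11$-colouring. That is the entire method --- no density estimates, no structural analysis of $X_1$ or $X_2$, just a direct brute-force verification on a concrete finite window.

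Your proposal takes a more elaborate route, and the middle step is where it breaks down. The claim that a $1$-packing of density close to $1/2$ ``must agree, up to a bounded perturbation, with one of the two bipartition classes'' is not true in any sense that lets you reduce to finitely many canonical configurations on a window: one can insert local defects into the checkerboard pattern anywhere, so on any fixed finite window \emph{every} independent set can arise as the restriction of a globally near-extremal $1$-packing. The same applies to $X_2$. So the enumeration you propose in the final step cannot be restricted to ``a few canonical configurations of $X_1\cup X_2$''; you would have to enumerate all of them anyway, at which point the density and rigidity preliminaries buy nothing.

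In short, the density discussion is a reasonable heuristic for guessing that $11$ colours might fail, but it does not feed into the actual argument. The working proof simply picks a window ($15\times 9$ suffices) and exhausts all attempted $11$-colourings directly.
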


 \begin{corollary}
  \label{Dolni mez}
  Let $D(1, t)$ be a distance graph, $t \geq 9$ an integer. Then $$\chi_{\rho}(D(1, t)) \geq 12.$$
 \end{corollary}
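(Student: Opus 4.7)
The plan is to reduce Corollary~\ref{Dolni mez} to Theorem~\ref{CtvercovaMrizka} via a distance-non-increasing embedding of a finite square-lattice obstruction into $D(1, t)$. First, by a routine compactness argument, Theorem~\ref{CtvercovaMrizka} is equivalent to the existence of a finite subset $Q \subseteq \mathbb{Z}^{2}$ that admits no packing $11$-coloring with respect to the $L^{1}$-metric. Inspecting the proof in~\cite{bib-ekstein10+} yields such a $Q$, and I would verify that (after translation) it can be taken to sit inside the strip $\{0, 1, \dots, 8\} \times \mathbb{Z}$, i.e., to have $x$-extent at most $8$. This is precisely what forces the hypothesis $t \geq 9$.

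Next, I would consider the embedding $\iota \colon \mathbb{Z}^{2} \to \mathbb{Z}$ defined by $\iota(x, y) = x + t y$ and record two elementary properties of its restriction to $Q$. For injectivity, $\iota(x, y) = \iota(x', y')$ implies $x - x' = t(y' - y)$, and since $|x - x'| \leq 8 < t$ this forces $(x, y) = (x', y')$. For the inequality $d_{D(1, t)}(\iota(u), \iota(v)) \leq d_{\mathbb{Z}^{2}}(u, v)$, the concatenation of $|\Delta x|$ unit steps and $|\Delta y|$ steps of size $t$ in $D(1, t)$ produces a path of length $|\Delta x| + |\Delta y| = d_{\mathbb{Z}^{2}}(u, v)$ from $\iota(u)$ to $\iota(v)$.

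To finish I would argue by contradiction. Suppose $\chi_{\rho}(D(1, t)) \leq 11$, fix a packing $11$-coloring $\phi$ of $D(1, t)$, and set $\psi := \phi \circ \iota$ on $Q$. Given distinct $u, v \in Q$ with $\psi(u) = \psi(v) = i$, injectivity gives $\iota(u) \neq \iota(v)$; the packing property of $\phi$ then gives $d_{D(1, t)}(\iota(u), \iota(v)) > i$, which propagates through the distance inequality above to $d_{\mathbb{Z}^{2}}(u, v) > i$. Hence $\psi$ is a packing $11$-coloring of $Q$, contradicting the choice of $Q$.

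The one nontrivial step is to confirm that the finite obstruction guaranteed by Theorem~\ref{CtvercovaMrizka} can be taken to have $x$-extent at most $8$; once this is in hand, the quotient-style embedding argument above transports the lower bound of $12$ from the square lattice to $D(1, t)$ for every $t \geq 9$.
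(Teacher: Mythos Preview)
Your proposal is correct and follows essentially the same route as the paper: the paper's proof simply records that the argument in~\cite{bib-ekstein10+} already produces a concrete $15\times 9$ grid with no packing $11$-coloring, and then observes that this grid sits inside $D(1,t)$ whenever $t\geq 9$ via exactly the map $\iota(x,y)=x+ty$ you describe. Your compactness step is therefore unnecessary---the finite obstruction and its width-$9$ bound are given explicitly---but the embedding and the distance comparison you spell out are precisely what is being used implicitly when the paper says ``contains a finite square grid as a subgraph.''
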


 Throughout the rest of the paper by a coloring we mean a packing \zlom coloring.

 \section{$D(1, t)$ with small $t$}
 \label{Small t}
  In this section we prove new lower and upper bounds for the packing chromatic number of $D(1, t)$  which are mentioned
  in Table~\ref{Tabulka}.

  \begin{lemma}
   \label{lem-15}
    $\chi_{\rho}(D(1, 4)) \leq 15.$
  \end{lemma}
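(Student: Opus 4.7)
The plan is to construct an explicit periodic packing $15$-coloring $c \colon \Z \to \{1, \ldots, 15\}$ of $D(1,4)$ and to verify it directly. The verification rests on an explicit description of the distance function in $D(1,4)$: writing a positive integer $n$ as $n = 4q + r$ with $0 \le r \le 3$, one checks that $\mathrm{dist}_{D(1,4)}(0, n) = q + r$ when $r \in \{0, 1\}$ and $\mathrm{dist}_{D(1,4)}(0, n) = q + 2$ when $r \in \{2, 3\}$ (realized by $q$ jumps of $\pm 4$ together with $r$ unit forward steps, or by $q + 1$ jumps of $+4$ followed by one backward unit step when $r = 3$). Consequently, for each $i \le 15$ one obtains an explicit finite ``forbidden offset'' set $B_i = \{n \ge 1 : \mathrm{dist}_{D(1,4)}(0, n) \le i\} \subseteq \{1, \ldots, 4i\}$; for instance $B_1 = \{1, 4\}$, $B_2 = \{1,2,3,4,5,8\}$, $B_3 = \{1, \ldots, 9, 12\}$, and so on up to $B_{15}$.

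Next, I would search (by computer, as the authors report) for a period $p$ and an assignment $(c_0, \ldots, c_{p-1}) \in \{1, \ldots, 15\}^p$ of colors to residues modulo $p$ such that the periodic extension $c(n) := c_{n \bmod p}$ is a valid packing coloring. Validity is equivalent to the following condition: for every color $i \in \{1, \ldots, 15\}$ and every pair of distinct residues $a, b \in \{0, \ldots, p-1\}$ with $c_a = c_b = i$, one has $|a - b + sp| \notin B_i$ for every $s \in \Z$. Because each $B_i$ is finite, only finitely many shifts $s$ need to be examined (namely those with $|sp| \le \max B_i + p$), so the verification of any candidate sequence is entirely mechanical.

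The main obstacle is locating such a pattern in the first place. Even for a period of only a few tens, the space of candidate assignments over $15$ colors is enormous; backtracking with aggressive pruning driven by the forbidden sets $B_i$ (extending a partial assignment position by position and discarding it whenever a $B_i$-conflict arises) is needed to discover a valid sequence. The authors report that this search was carried out with two independent implementations and the resulting sequences are available online. Once the sequence is in hand, the proof is completed by finite case checking of the condition above for each color class against $B_1, \ldots, B_{15}$.
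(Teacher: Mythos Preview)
Your approach is essentially the paper's: exhibit a periodic packing $15$-coloring of $D(1,4)$ and verify it against the (correctly computed) distance function. The paper completes the argument by writing out an explicit period-$320$ pattern, found by simulated annealing rather than backtracking; the only thing missing from your proposal is the witness sequence itself. (One small technical point: your verification condition should also cover the case $a=b$, i.e.\ require $sp\notin B_i$ for all $s\ge 1$, which is automatic once $p>4\cdot 15=60$.)
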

  \begin{proof}
   We prove this lemma by exhibiting a repeating pattern for 15-packing coloring of $D(1,4)$.
   The pattern has period 320 and is given here:

   \noindent
   1,3,1,2,4,1,5,1,8,2,1,3,1,10,11,1,2,1,6,4,1,3,1,2,5,1,7,1,9,2,1,3,1,12,4,1,2,1,13,\\
   8,1,3,1,2,6,1,5,1,4,2,1,3,1,7,10,1,2,1,15,14,1,3,1,2,5,1,4,1,11,2,1,3,1,6,9,1,2,1,\\
   8,7,1,3,1,2,4,1,5,1,12,2,1,3,1,10,13,1,2,1,4,6,1,3,1,2,5,1,7,1,8,2,1,3,1,4,14,1,2,\\
   1,11,9,1,3,1,2,6,1,5,1,4,2,1,3,1,7,10,1,2,1,8,12,1,3,1,2,5,1,4,1,13,2,1,3,1,6,9,1,\\
   2,1,15,7,1,3,1,2,4,1,5,1,8,2,1,3,1,10,11,1,2,1,6,4,1,3,1,2,5,1,7,1,9,2,1,3,1,12,4,\\
   1,2,1,13,8,1,3,1,2,6,1,5,1,4,2,1,3,1,7,10,1,2,1,14,15,1,3,1,2,5,1,4,1,11,2,1,3,1,\\
   6,9,1,2,1,8,7,1,3,1,2,4,1,5,1,12,2,1,3,1,10,13,1,2,1,4,6,1,3,1,2,5,1,7,1,8,2,1,3,1,\\
   4,11,1,2,1,15,9,1,3,1,2,6,1,5,1,4,2,1,3,1,7,10,1,2,1,8,12,1,3,1,2,5,1,4,1,13,2,1,\\
   3,1,6,9,1,2,1,14,7.

   The pattern was found with help of a computer using simulated annealing heuristics \cite{LaaAart}.
  \end{proof}

  \begin{lemma}
   \label{lem-small-t-computer}
    $$14 \leq \chi_{\rho}(D(1, 4)),$$
    $$12 \leq \chi_{\rho}(D(1, 5)),$$
    $$14 \leq \chi_{\rho}(D(1, 7)),$$
    $$13 \leq \chi_{\rho}(D(1, 9)).$$
  \end{lemma}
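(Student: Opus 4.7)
The plan is to verify each of the four lower bounds by computer-assisted exhaustive search. Fix one of the inequalities, say $\chi_{\rho}(D(1,4)) \geq 14$; the other three are analogous with appropriate parameters. Suppose for contradiction that $D(1,4)$ admits a packing $13$-coloring $c$. Any restriction of $c$ to a finite interval $S = [0, N-1] \subset \mathbb{Z}$ is then a valid assignment of colors in $\{1, \ldots, 13\}$ to $S$ such that any two vertices of the same color $i$ have distance strictly greater than $i$ in the ambient graph $D(1,4)$. Hence it suffices to show that, for some sufficiently large $N$, no such assignment exists; this is a finite constraint-satisfaction problem decidable by exhaustive search.

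The search is a depth-first backtracking that colors vertices $0, 1, 2, \ldots$ from left to right. At vertex $v$, for each candidate color $i \in \{1,\ldots,k-1\}$ we reject $i$ if some already-colored $u < v$ carries color $i$ and $\mathrm{dist}_{D(1,t)}(u,v) \leq i$. The ambient distance is computable in constant time from $m = v-u$: writing $m = qt + r$ with $0 \leq r < t$, one has $\mathrm{dist}_{D(1,t)}(u,v) = \min\{q+r,\ (q+1)+(t-r)\}$, the second term corresponding to one extra $+t$ step followed by $t-r$ backward unit steps. So for each new vertex only a bounded window of predecessors must be checked against each color. Standard forward-checking, in which for each not-yet-colored vertex within a fixed horizon one maintains the set of still-available colors and prunes as soon as some such set becomes empty, together with a symmetry reduction forcing colors to appear in increasing order of first occurrence, keeps the tree tractable.

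The main obstacle is combinatorial explosion: colors $1$ and $2$ already occupy density roughly $\tfrac12 + \tfrac13$ of $\mathbb{Z}$ in any near-optimal coloring, so the branching is driven by the placement of the middle colors $3,4,5,\ldots$, whose constraints are comparatively weak for $t \in \{4,5,7,9\}$. The practical trick is therefore choosing $N$ just large enough to force infeasibility while the tree can still be exhausted, and using variable/value ordering heuristics that attack the most constrained vertex first. Experiments indicate that intervals of a few hundred vertices suffice for the values in the lemma. As a safeguard, two independent implementations in Pascal and C++ (both available at the URL given in the introduction) are executed; both terminate reporting no valid coloring, certifying each of the four stated bounds.
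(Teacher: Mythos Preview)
Your overall strategy---exhaustive backtracking over a finite window of vertices, with two independent implementations as a cross-check---is exactly what the paper does, and your distance formula for $D(1,t)$ is correct.

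There is, however, one genuine error. You invoke ``a symmetry reduction forcing colors to appear in increasing order of first occurrence.'' That reduction is valid for ordinary proper colouring because the colour classes are interchangeable, but it is \emph{not} valid for packing colouring: colour $i$ carries the constraint ``pairwise distance $>i$,'' so colours cannot be permuted. Imposing this restriction may prune genuine packing colourings of the window, so a report of ``no extension found'' would not be a sound proof of infeasibility. The only symmetries of $D(1,t)$ you may exploit are the translations $v\mapsto v+1$ and the reflection $v\mapsto -v$. The paper uses the translational symmetry in a different way: having (implicitly, by induction on $c$) that colour $c$ must occur in any packing $c$-colouring, it fixes $\varrho(1)=c$ and then attempts to extend to $\{1,\dots,k\}$; failure for the specific $k$ in their Table~\ref{tab-comsearch} yields $\chi_\rho>c$. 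Replace your colour-ordering reduction with this anchoring trick (or simply drop the reduction and accept a larger search), and the argument goes through.
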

  \begin{proof}
   These results were obtained by a computer using a brute force search programs. We have written two independent
   programs (one in Pascal and one in C++) implementing the brute force search. The programs take vertices
   $X = \{1,2,\ldots k\}$ from $D(1,t)$. Then they try to construct a packing coloring~$\varrho$ of $X$ using colors
   from 1 up to $c$. First, they assign $\varrho(1) = c$ and then they try to extend $\varrho$ to $X$. If the extension
   is not possible we conclude that $\chi_{\rho}(D(1,t)) > c$. The results of computations are summarized
   in Table~\ref{tab-comsearch}.

 \begin{table}[ht]
 \begin{center}
  \begin{tabular}{|c|c|c|c|c|}
   \hline
   D   & $c$ & $k$ & Configurations & Time  \\
   \hline
   1,4 &  13  & 81  & $6.4 \cdot 10^{12}$ & 26 days \\
   1,5 &  11  & 134 & $8.1 \cdot 10^{9} $ & 25 minutes   \\
   1,7 &  13  & 229 & $6.9 \cdot 10^{13}$ & 335 days   \\
   1,9 &  12  & 66  & $6.2 \cdot 10^{12}$ & 28 days \\
   \hline
   \end{tabular}
   \caption{Computations from Lemma~\ref{lem-small-t-computer}. Time of the computation is measured on a workstation
            from year 2010.}
   \label{tab-comsearch}
   \end{center}
 \end{table}
  \end{proof}

  Let $H_k$ denote a finite subgraph of $D(1,t)$ on vertices $1,\dots, \, k$ and let $H'_k$ denote a finite subgraph of
  $D(1,t)$ on vertices $-k,-k+1,\dots, \, k$.

 For a subset $X$ of vertices of $D(1,t)$ we define its {\em density} $d(X)$ as
 \begin{displaymath}
    d(X) =\limsup\limits_{k\to\infty} \frac{\vert X \cap V(H'_k)\vert}{\vert V(H'_k)\vert}.
 \end{displaymath}
 For a color $c$ we define its \emph{density} $d(c)$ as
 \begin{displaymath}
   d(c)= \max\limits_{\chi} d(X_c),
 \end{displaymath}
 where $\chi$ is a packing coloring of $D(1,t)$ and $X_c$ is a $c$-packing.
 Similarly, by $d(c_1,\dots ,c_l)$ we mean
 \begin{displaymath}
   d(c_1,\dots ,c_l)= \max\limits_{\chi} d(X_{c_{1}} \cup \ldots \cup X_{c_{l}}).
 \end{displaymath}

The following statement was proved in \cite{bib-fiala09+}.

 \begin{lemma}\emph{\textbf{\cite{bib-fiala09+}}}
  \label{lem-delic}
  If there exists a coloring of $D(1,t)$ by $k$ colors then,
  for every $1 \leq l \leq k$, it holds that
  \begin{displaymath}
   \sum\limits_{i=1}^k d(i)  \geq  d(1,\dots, l) + \sum\limits_{i=l+1}^k d(i) \geq  d(1,\dots, k) = 1.
  \end{displaymath}
 \end{lemma}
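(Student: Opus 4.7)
The plan is to derive the entire chain from a single observation: the density $d(\cdot)$ is subadditive under unions. Indeed, the pointwise bound $|(X \cup Y) \cap V(H'_k)| \leq |X \cap V(H'_k)| + |Y \cap V(H'_k)|$ together with subadditivity of $\limsup$ yields $d(X \cup Y) \leq d(X) + d(Y)$, and by induction $d(X_1 \cup \dots \cup X_r) \leq \sum_{i=1}^{r} d(X_i)$ for any sets $X_1, \dots, X_r \subseteq \Z$.

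I would then treat the three parts in the chain as follows. For the rightmost equality $d(1,\dots,k) = 1$, fix any packing $k$-coloring $\chi$; its color classes partition $V(D(1,t)) = \Z$, so their union has density exactly $1$, and since no density exceeds $1$, this forces the maximum $d(1,\dots,k)$ to equal $1$. For the middle inequality, in an arbitrary coloring $\chi$ split the classes into $\{X_1,\dots,X_l\}$ and $\{X_{l+1},\dots,X_k\}$ and apply the subadditivity above to the second group:
\[
d(X_1 \cup \dots \cup X_k) \leq d(X_1 \cup \dots \cup X_l) + \sum_{i=l+1}^{k} d(X_i).
\]
Taking the maximum of each term over all colorings, and using $\max_{\chi} \sum_i f_i(\chi) \leq \sum_i \max_{\chi} f_i(\chi)$, I obtain $d(1,\dots,k) \leq d(1,\dots,l) + \sum_{i=l+1}^{k} d(i)$. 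The leftmost inequality is the same argument applied only to the first $l$ classes, giving $d(X_1 \cup \dots \cup X_l) \leq \sum_{i=1}^{l} d(X_i)$; after taking maxima one adds $\sum_{i=l+1}^{k} d(i)$ to both sides.

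The only delicate point is the interchange of $\max_{\chi}$ with a sum just used, because the colorings realizing the individual maxima $d(i)$ may a priori differ; but $\max \sum \leq \sum \max$ holds unconditionally, which is all that is needed. The entire lemma thus reduces to subadditivity of density together with the tautology that a packing $k$-coloring covers every integer.
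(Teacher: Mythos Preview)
Your argument is correct. The key ingredients---subadditivity of $\limsup$, the pointwise bound $|(X\cup Y)\cap V(H'_k)|\le |X\cap V(H'_k)|+|Y\cap V(H'_k)|$, the fact that a packing $k$-coloring partitions $\Z$ so the union of all classes has density exactly $1$, and the inequality $\max_\chi \sum_i f_i(\chi)\le \sum_i \max_\chi f_i(\chi)$---are all used properly and in the right order. Your remark that the individual maxima $d(i)$ may be realised by different colorings is exactly the point, and you handle it correctly.

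As for comparison with the paper: there is nothing to compare. The paper does not prove this lemma; it merely quotes it from~\cite{bib-fiala09+} and uses it as a black box in the proof of Lemma~\ref{Male t}. Your write-up would therefore serve as a self-contained justification that the present paper lacks. One cosmetic point: the paper writes $\max_\chi$ rather than $\sup_\chi$, and it is not obvious the maximum is attained over all packing colorings; however, every step of your argument works verbatim with $\sup$ in place of $\max$, so this does not affect the validity of the chain of inequalities.
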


  \begin{lemma}
   \label{Male t}
    $$15 \leq \chi_{\rho}(D(1, 6)),$$
    $$15 \leq \chi_{\rho}(D(1, 8)).$$
  \end{lemma}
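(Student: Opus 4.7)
The plan is to apply Lemma~\ref{lem-delic}. Suppose for contradiction that $\chi_\rho(D(1,t)) \le 14$ for $t \in \{6, 8\}$; then in any packing $14$-coloring one must have $\sum_{i=1}^{14} d(i) \ge 1$, so it suffices to establish upper bounds on the individual densities whose sum is strictly less than $1$.

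Viewed as a subset $X \subseteq \mathbb{Z}$, an $i$-packing of $D(1,t)$ has the property that every pair of its elements differs by some $d$ with $\mathrm{dist}_{D(1,t)}(0, d) > i$. Writing $F_i(t) = \{d \in \mathbb{Z}_{>0} : \mathrm{dist}_{D(1,t)}(0,d) \le i\}$ for the finite set of forbidden positive differences, $d(i)$ equals the supremum, over periodic sets $X \subseteq \mathbb{Z}$ of period $P$ avoiding every difference in $F_i(t)$, of the density $|X \cap [0,P-1]|/P$. For small $i$ I would determine this supremum, or a tight upper bound on it, by a brute-force enumeration of admissible periodic patterns up to some cutoff period $P_i$; such a search is a finite computation of the same flavor as the one in Lemma~\ref{lem-small-t-computer}. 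For larger $i$ the coarse bound $d(i) \le 1/g_i$, where $g_i = \min(\mathbb{Z}_{>0} \setminus F_i(t))$ is the least admissible gap between successive elements of any such $X$, suffices; note that $g_i$ grows roughly like $ti/(t/2+1)$, so these tail contributions are small.

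Summing the individual bounds for $t = 6$ and for $t = 8$ and verifying strict inequality with $1$ then yields $\chi_\rho(D(1,t)) \ge 15$ in each case. The main obstacle is making the bounds on the first few densities $d(1)$, $d(2)$, $d(3)$ tight enough: the crude $1/g_i$ bound is too loose there (for instance in $D(1,6)$ one has $g_1 = 2$, while in fact $d(1) \le 3/7$, realized by the pattern $\{0,2,4\}$ modulo $7$, since the difference $6$ is also forbidden), so a sharper combinatorial or computer-assisted estimate of $d(1), d(2), d(3)$ (possibly via grouped densities $d(c_1,\ldots,c_\ell)$, using the stronger second inequality in Lemma~\ref{lem-delic}) will be needed to push the total sum below $1$.
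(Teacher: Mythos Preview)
Your plan is essentially the paper's: invoke Lemma~\ref{lem-delic}, bound the contribution of the small colours by a finite computer search, and control the tail with the crude gap estimate. The one substantive difference is that the paper goes directly to the \emph{grouped} density $d(1,\ldots,\ell)$ rather than trying individual densities first. For $D(1,6)$ it verifies $d(1,2,3,4)\le 31/41$ by checking that at most $31$ vertices of $H_{41}$ can carry colours $1$--$4$, then adds $d(i)\le 1/(6i-9)$ for $5\le i\le 14$ (your $g_i$ bound, in the form: $H_{6i-9}$ has diameter $\le i$); for $D(1,8)$ it verifies $d(1,\ldots,6)\le 50/58$ on $H_{58}$ and adds $d(i)\le 1/(8i-20)$ for $7\le i\le 14$. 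Both sums land just under~$1$ (about $0.9998$ and $0.9991$).

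Your instinct that individual bounds on $d(1),d(2),d(3)$ are too loose is correct, but sharpening them separately would almost certainly still fail: the optimal $i$-packings for different $i$ overlap heavily, so $\sum_i d(i)$ can comfortably exceed $1$ even when exact. The joint bound $d(1,\ldots,\ell)$ that you mention only as a fallback is in fact the heart of the argument; the rest is routine.
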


 \begin{proof}
 To the contrary we suppose that $\chi_{\rho}(D(1,6)) \leq 14$. Using a computer we verified that
 $d( 1, 2, 3, 4) \leq \frac {31}{41}$ since we can color at most $31$ vertices of $H_{41}$ using colors
 $1, 2, 3, 4$. The computation took about three minutes and it checked $4.6 \cdot 10^{9}$ configurations. Clearly,
 $d(i) \leq \frac{1}{6i-9}$ for $i\geq 2$ since there is no pair of vertices in $H_{6i-9}$ with distance greater
 than $i$ and hence at most one vertex of $H_{6i-9}$ can be colored by color $i$. By Lemma \ref{lem-delic} we
 easily get
 \begin{displaymath}
  d(1, 2, \dots, 14) \leq d( 1, 2, 3, 4) + \sum\limits_{i=5}^{14} d(i) \leq \frac{31}{41} + \frac{1}{21} + \dots +
  \frac{1}{75} = 0.999771 < 1,
 \end{displaymath}
 which is not possible since $d(1, 2, \dots, 14) = 1$ by the assumption that \zlom $\chi_{\rho}(D(1,6))~\leq~14$.

 Now to the contrary we suppose that $\chi_{\rho}(D(1,8)) \leq 14$. Using a computer we verified that
 $d( 1, \dots, 6) \leq \frac {50}{58}$ since we can color at most $50$ vertices of $H_{58}$ using colors
 $1, \dots, 6$. The computation took about sixty hours and it checked $7.5 \cdot 10^{11}$ configurations. Clearly,
 $d(i) \leq \frac{1}{8i-20}$ for $i\geq 3$ since there is no pair of vertices in $H_{8i-20}$ with distance greater
 than $i$ and hence at most one vertex of $H_{8i-20}$ can be colored by color $i$. By Lemma \ref{lem-delic} we
 easily get
 \begin{displaymath}
  d(1, 2, \dots, 14) \leq d( 1, \dots, 6) + \sum\limits_{i=7}^{14} d(i) \leq \frac{50}{58} + \frac{1}{36} + \dots +
  \frac{1}{92} = 0.999110 < 1,
 \end{displaymath}
 which is not possible since $d(1, 2, \dots, 14) = 1$ by the assumption that \zlom $\chi_{\rho}(D(1,8))~\leq~14$.
 \end{proof}

 \section{$D(1, t)$ with large $t$}
 \label{Large t}

 A key observation for this section is that a distance graph $D(1, t)$, for $t > 1$, can be drawn as an infinite spiral
 with $t$ lines orthogonal to the spiral (e.g. $D(1, 5)$ on Figure~\ref{D(1,5)}).

  \begin{figure}[ht]
   \begin{center}
   \includegraphics[width=12.cm]{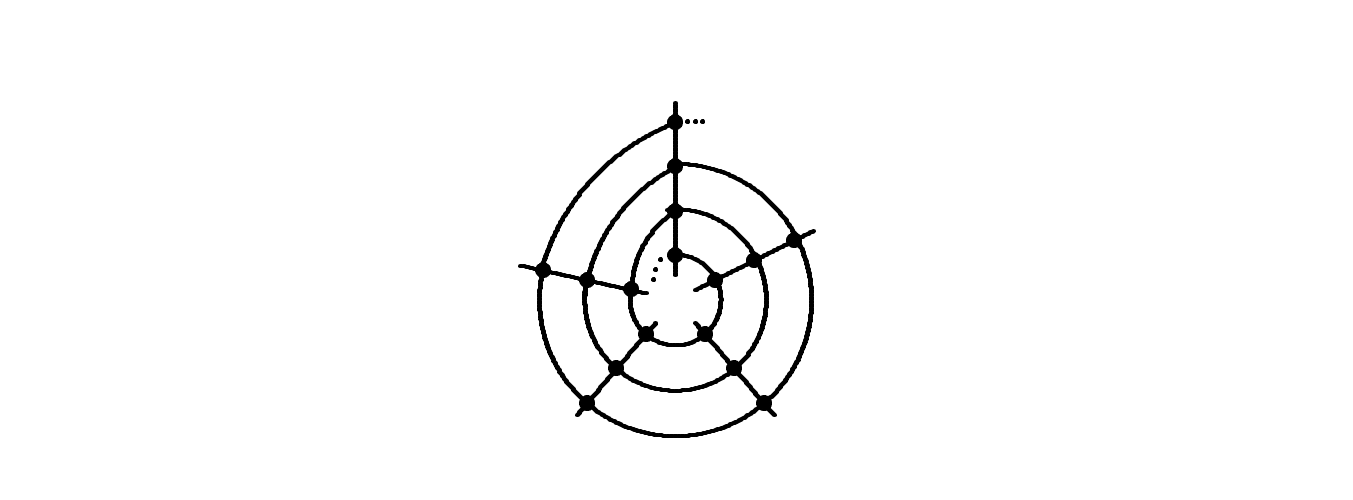}\\
   \end{center}
  \caption{Distance graph $D(1,5)$.}
  \label{D(1,5)}
 \end{figure}

 For $i \in \{0, 1, ..., t - 1\}$, the \emph{$i$-band} in a distance graph $D(1, t)$, denoted by $B_{i}$, is an infinite
 path in $D(1, t)$ on the vertices $V(B_{i}) = \{i + kt, k \in \mathbb{Z}\}$. Note that the band $B_{i}$ corresponds to
 one of $t$ lines orthogonal to the spiral. For $i \in \{0, 1, ..., t - 24\}$, the \emph{$i$-strip} in a distance graph
 $D(1, t)$, $t > 23$, denoted by $S_{i}$, is a subgraph of $D(1, t)$ induced by the union of vertices of
 $B_{i}, B_{i + 1}, ..., B_{i + 23}$.

 We use the following statement proved by Goddard et al. in \cite{God}.

 \begin{proposition}\emph{\textbf{\cite{God}}}
  \label{Goddard}
   For every $k \in \N$, the infinite path can be colored by colors $k, k + 1, ..., 3k + 2$.
 \end{proposition}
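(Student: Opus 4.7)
My plan is to exhibit an explicit periodic packing coloring of $P_\infty$ that uses only colors from $\{k, k+1, \ldots, 3k+2\}$. The construction proceeds in two stages. First I place color $k$ at every vertex $i$ with $i \equiv 0 \pmod{k+1}$; since consecutive occurrences are at distance $k+1 > k$, this is a valid $k$-packing of density $1/(k+1)$. Viewed modulo $2(k+1)$, color $k$ occupies exactly the two residues $0$ and $k+1$.

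Second, I distribute the remaining $2k+2$ colors over the $2k$ leftover residues modulo $2(k+1)$. Each color $c$ with $k+1 \leq c \leq 2k+1$ is assigned to a distinct residue; since $2(k+1) > c$, one color per residue respects the packing condition, and this uses $k+1$ of the $2k$ available residues. The final $k-1$ residues are loaded with the $k+1$ high colors $c \in \{2k+2, \ldots, 3k+2\}$ by introducing a further periodicity: doubling the period to $4(k+1)$ (or tripling to $6(k+1)$ when $k = 2$) supplies each such residue with two or three slots per period, among which the high colors are dealt cyclically. Since $4(k+1) > 3k+2$ for all $k \geq 0$, each high color still has spacing exceeding its value.

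The main obstacle is this final pigeon-hole step: for $k \geq 3$ the $2(k-1)$ slots created by doubling the period comfortably absorb the $k+1$ high colors; for $k = 2$ the period must be stretched to $6(k+1) = 18$ so that $3$ slots appear in the single leftover residue; and for $k = 1$ the classical pattern $1, 2, 1, 3$ of period $4$ already uses only colors from $\{1, 2, 3, 4, 5\}$, so no high colors are needed. Verification then reduces to a direct check, color by color, that the within-class spacing in the chosen period strictly exceeds the color value.
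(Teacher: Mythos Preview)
The paper does not prove this proposition: it is quoted from Goddard et~al.\ \cite{God} and used as a black box in the proof of Theorem~\ref{DistanceGraph}. So there is no paper proof to compare against; the only question is whether your argument stands on its own.

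It does not, and the defect is not in the bookkeeping of the final stage but in the overall scheme. After you place color $k$ on the two residues $0$ and $k+1$ modulo $2(k+1)$, and then place each of the $k+1$ middle colors $k+1,\ldots,2k+1$ on a single residue modulo $2(k+1)$, the fraction of vertices still uncolored is
\[
1-\frac{1}{k+1}-\frac{k+1}{2(k+1)} \;=\; \frac{1}{2}-\frac{1}{k+1} \;=\; \frac{k-1}{2(k+1)}.
\]
You propose to cover this remainder with the high colors $2k+2,\ldots,3k+2$. But on the infinite path the density of any $c$-packing is at most $1/(c+1)$, so the total density available from all the high colors together is at most
\[
\sum_{c=2k+2}^{3k+2}\frac{1}{c+1} \;=\; \sum_{j=2k+3}^{3k+3}\frac{1}{j}.
\]
For $k=10$ the remainder equals $9/22\approx 0.4091$, while the right-hand sum is $\sum_{j=23}^{33}1/j\approx 0.3980$. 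More generally, as $k\to\infty$ the remainder tends to $1/2$ whereas the sum tends to $\ln(3/2)\approx 0.405$. Hence no choice of period and no cyclic dealing of the high colors can possibly finish the coloring once $k$ is large: the high colors simply do not carry enough density. The scheme already goes wrong at the second stage, where you place each middle color $c\in\{k+1,\ldots,2k+1\}$ at density $1/(2k+2)$ rather than near its maximum $1/(c+1)$; the capacity you discard there is exactly what the high colors cannot replace.
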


 Holub and Soukal \cite{HolSo} improved the upper bound for a packing coloring of the square lattice to 17 by finding
 a pattern on 24$\times$24 vertices using color 1 on positions as white places on a chessboard. We use this pattern to
 prove the following lemma.

 \begin{lemma}
  \label{Strip}
   Let $D(1, t)$ be a distance graph, $t > 24$, and $S_{i}$ its $i$-strip. Then $\chi_{\rho}(S_{i}) \leq 17$.
 \end{lemma}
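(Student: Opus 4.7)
The plan is to realize $S_i$ as a 24-row infinite strip of the square lattice $\Z^2$ and then import the 17-packing coloring of $\Z^2$ due to Holub and Soukal.

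First I would set up coordinates on $S_i$: every vertex has a unique representation $(i+j)+kt$ with $0 \leq j \leq 23$ and $k \in \Z$, and I would map it to the grid point $(j,k)$. Because $t > 24$, the only pairs of vertices in $S_i$ whose difference in $\Z$ lies in $\{\pm 1, \pm t\}$ are those sharing a $j$-value with consecutive $k$-values (giving the ``column'' edges of the grid) and those sharing a $k$-value with consecutive $j$-values (giving the ``row'' edges). Hence this map is a graph isomorphism from $S_i$ onto the 24-row infinite strip of $\Z^2$. Moreover, distances in $S_i$ coincide with $\ell_1$-distances between the corresponding lattice points, since any shortest $\ell_1$-path between two points lying in the 24 rows can be realized without leaving those rows.

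With this identification in hand, I would take the 17-packing coloring of $\Z^2$ given by the $24 \times 24$ repeating pattern of Holub and Soukal, restrict it to the 24 rows of the strip, and pull it back to $S_i$. For any two vertices of the same color $c$, their $\Z^2$-distance exceeds $c$; since both lie in the strip, this equals their distance in $S_i$, so the pulled-back coloring is a valid packing coloring of $S_i$ using at most 17 colors.

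The only point requiring care is confirming that $S_i$ carries no extra edges or shortcut paths beyond the grid adjacencies just described, but both facts follow directly from $t > 24$ and from the observation that for $|j_1 - j_2| \leq 23$ the equations $(j_1-j_2)+(k_1-k_2)t \in \{\pm 1, \pm t\}$ force either $k_1=k_2$ with $|j_1-j_2|=1$ or $j_1=j_2$ with $|k_1-k_2|=1$. I do not expect this step to pose a real obstacle, so the lemma reduces to a direct transfer of the known $\Z^2$ upper bound.
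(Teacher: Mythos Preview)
Your proposal is correct and follows essentially the same approach as the paper: tile $S_i$ with the Holub--Soukal $24\times 24$ pattern for $\Z^2$. You are simply more explicit than the paper about the identification of $S_i$ with a $24$-row substrip of $\Z^2$ and about why distances are preserved, points the paper leaves implicit in a two-line proof.
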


 \begin{proof}
  We cyclically use the pattern on 24$\times$24 vertices to color all the vertices of $S_{i}$. Hence it is obvious that
  $\chi_{\rho}(S_{i}) \leq 17$.
 \end{proof}

 \begin{lemma}
  \label{lem-band}
   Let $D(1, t)$ be a distance graph and $B_{i}$ its $i$-band. If vertices $\{i + 2jt, j \in \Z\}$ are colored by
   color 1, then it is possible to extend the coloring to all vertices of $B_{i}$ using colors $k, k +1, ..., 2k - 1$,
   for every $k \in \N, k > 2$.
 \end{lemma}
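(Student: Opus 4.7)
The plan is to extend the coloring by a simple periodic pattern modulo $k$ on the uncolored vertices of $B_i$, with a short distance computation to justify the packing condition.

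I would first reparametrize $B_i$ by setting $v_a := i + at$ for $a \in \Z$. By hypothesis the even-indexed vertices $v_{2j}$ already carry color~$1$, so the task reduces to assigning colors in $\{k, k+1, \ldots, 2k-1\}$ to the odd-indexed vertices $v_{2j+1}$. The key preliminary observation is that the distance in $D(1,t)$ between two band-mates $v_a$ and $v_b$ equals $|a - b|$. Any walk from $v_a$ to $v_b$ consists of edges of lengths $1$ and $t$ with net displacement $(b-a)t$; if $p$ is the net number of unit-edges and $r$ the net number of $t$-edges, then $p + rt = (b-a)t$, forcing $p$ to be a multiple of~$t$. For $t \geq 2$ any walk that uses unit-edges is therefore strictly longer than the direct band walk of length $|a-b|$.

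Given this, I would define the extension by
\[
\chi(v_{2j+1}) \;:=\; k + (j \bmod k), \qquad j \in \Z.
\]
Only colors from $\{k, k+1, \ldots, 2k-1\}$ are used, and since $k > 2$ none of them coincides with color~$1$, so the extension is consistent with the given coloring.

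To check packing, suppose $v_{2j+1}$ and $v_{2j'+1}$ both receive color $c = k + r$ with $0 \leq r \leq k-1$. Then $j \equiv j' \equiv r \pmod{k}$, so if the two vertices are distinct we have $|j - j'| \geq k$. By the distance identity their distance in $D(1,t)$ is $|(2j+1)-(2j'+1)| = 2|j - j'| \geq 2k$, which exceeds $c = k + r$ because $r < k$. Hence each color class is a valid $c$-packing, completing the extension. I do not expect any substantial obstacle: the only nontrivial ingredient is the distance identity for band-mates, after which the argument is purely arithmetic.
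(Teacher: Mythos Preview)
Your proposal is correct and takes essentially the same approach as the paper: the paper colors $B_i$ by the periodic pattern $1, k, 1, k+1, \ldots, 1, 2k-1$, which is exactly your rule $\chi(v_{2j+1}) = k + (j \bmod k)$, and then notes that each color other than~$1$ recurs with period $2k > 2k-1$. Your write-up is slightly more explicit in justifying that the distance in $D(1,t)$ between band-mates $v_a$ and $v_b$ equals $|a-b|$, a fact the paper uses implicitly.
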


 \begin{proof}
  We color $B_{i}$ by the following periodic pattern: $1, k, 1, k + 1, ..., 1, 2k - 1$. As the period for every
  color different from 1 is $2k$ and the largest used color is $2k - 1$, we conclude that we get a packing coloring of
  $B_{i}$.
 \end{proof}

 \begin{lemma}
  \label{lem-2bands}
  Let $D(1, t)$ be a distance graph, $t \geq 50$, and $B_{i}$, $B_{i + 25}$ its bands. Then it is possible to color
  $B_{i}$ and $B_{i + 25}$ using colors $C = \{1, 18, 19, ..., 35\}$.
 \end{lemma}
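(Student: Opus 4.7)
The plan is to apply Lemma~\ref{lem-band} with $k = 18$ to each of the two bands $B_i$ and $B_{i+25}$ so that both use the palette $C$, and to tune the relative phase of the two patterns so that same-color vertices from different bands stay far apart in $D(1,t)$.

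On $B_i$ I apply Lemma~\ref{lem-band} directly with $k = 18$: color $1$ occupies $\{i+2jt : j \in \Z\}$ and, in the resulting pattern, color $c \in \{18,\dots,35\}$ occupies $\{i + (2(c-18) + 1 + 36\ell)t : \ell \in \Z\}$. On $B_{i+25}$ I use a cyclic rotation by $18$ of the same pattern: color $1$ is still at $\{(i+25) + 2jt : j \in \Z\}$ but color $c \in \{18,\dots,35\}$ now lives on $\{(i+25) + (2(c-18) + 19 + 36\ell)t : \ell \in \Z\}$. Every cyclic rotation of the pattern of Lemma~\ref{lem-band} is itself a valid coloring of the band, so each band is validly colored in isolation and only the cross-band pairs need checking.

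For color $1$, a cross-band pair differs in $\Z$ by $25+2(\ell_2-\ell_1)t$, which is congruent to $25$ modulo $t$; for $t\geq 50$ this forces the $D(1,t)$-distance to be at least $\min(25,t-25)=25>1$. For $c\in\{18,\dots,35\}$ a cross-band pair differs in $\Z$ by $25+qt$ with $q=18+36\ell$, so $|q|\geq 18$. The main calculation is the distance bound in $D(1,t)$: any walk realizing such a difference corresponds to integers $p,q'$ with $p+q't=25+qt$ and has length $|p|+|q'|$, so writing $q'=q-r$ the task reduces to minimizing $|25+rt|+|q-r|$. Since $t-25\geq 25$ for $t\geq 50$, only $r\in\{-1,0,1\}$ can beat the straightforward choice $r=0$ with value $25+|q|$, and the gain is at most $1$. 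Hence the cross-band distance is at least $24+18=42>35\geq c$, certifying the packing constraint for every color $c\in C$.

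I expect the main difficulty to be precisely this distance estimate in $D(1,t)$: the packing constraint for color $35$ is tight (strict distance greater than $35$ against the computed $42$), so the hypothesis $t\geq 50$ is what keeps the margin positive, and it is also what lets us restrict the minimization above to $r\in\{-1,0,1\}$. Any smaller shift than the half-period $18$ would break the color-$35$ constraint, which is why the half-period rotation is the natural and forced choice.
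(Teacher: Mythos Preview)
Your proof is correct and follows essentially the same strategy as the paper: apply the periodic pattern of Lemma~\ref{lem-band} with $k=18$ to each band and offset the two copies so that equal colors land far apart. The paper phrases the offset as ``start $B_{i+25}$ at the vertex $i-kt$'' for some $k\in\{11,\dots,25\}$ and asserts the cross-band distance is $\min\{k,36-k\}+25$; your choice corresponds to $k=18$, and your explicit minimization over representations $p+q't$ gives a more careful justification of the required bound $>35$.
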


 \begin{proof}
  We color the vertices of $B_{i}$ and $B_{i + 25}$ repeating the pattern from the proof of Lemma~\ref{lem-band}. We
  start to color $B_{i}$ at the vertex $i$ and $B_{i + 25}$ at the vertex $i - kt$ for any $k \in \{11, 12,..., 25\}$.
  Lemma~\ref{lem-band} assures that the distance between two vertices colored with color $c$ in a single band is greater
  than~$c$. Let $u \in V(B_{i})$ and $v \in V(B_{i+25})$ be colored by the same color. By the pattern from the proof of
  Lemma~\ref{lem-band} we conclude that the distance between $u$ and $v$ is $\min\{k, 36 - k\} + 25$ which is greater
  than 35. Hence we have a~packing coloring of $B_{i}$ and $B_{i + 25}$.
 \end{proof}

 For a distance graph $D(1, t)$ we use notation $D(1, t) = S_{0}B_{24}S_{25}B_{49}\ldots$ to express that
 we view $D(1, t)$ as a union of strips $S_{0},S_{25},\ldots$ and bands $B_{24},B_{49},\ldots$.

 Now we are ready to prove Theorem~\ref{DistanceGraph}.

 \begin{proof}[Proof of Theorem~\ref{DistanceGraph}]

 \emph{Case 1:} $t$ is odd.\\
  Let $r, s$ be positive integers such that $t = 24s + r$, where $r < 24$ is also odd. Since $t \geq 575$, we get
  $s \geq r$ (for $r = 23$ we have $24s \geq 552$). Thus we have $s$ disjoint strips and $r$ disjoint bands such that
  $D(1, t) = S_{0}B_{24}S_{25}B_{49}... S_{24(r - 1) + r - 1}\zlom B_{24r + r - 1}S_{24r + r}... S_{24(s - 1) + r}$.

  For odd $j = 1, 3, ..., r$, we color the strips $S_{24(j - 1) + j - 1}$ cyclically with the pattern on 24$\times$24
  vertices starting at the vertex $24(j - 1) + j - 1$. For even $j = 2, 4, ..., r - 1$, we color $S_{24(j - 1) + j - 1}$
  cyclically with the pattern on 24$\times$24 vertices starting at the vertex $24(j - 1) + j - 1 - t$. For
  $j = r + 1, r + 2, ..., s$, we color $S_{24(j - 1) + r}$ cyclically with the pattern on 24$\times$24 vertices starting
  at the vertex $24(j - 1) + r - t$. Hence we have a packing coloring of all $s$ disjoint strips of $D(1, t)$ using the
  same principle as in the proof of Lemma~\ref{Strip}.

  For odd $j = 1, 3, ..., r - 2$, we color the bands $B_{24j + j - 1}$ cyclically with the sequence of colors
  $1, 18, 1, 19, ..., 1, 35$ starting at the vertex $24j + j - 1$. For even $j = 2, 4, ..., r - 3$, we color
  $B_{24j + j - 1}$ cyclically with the sequence of co\-lors $1, 18, 1, 19, ..., 1, 35$ starting at the vertex
  $24j + j - 1 - 17t$. We color $B_{24(r - 1) + r - 2}$, $B_{24r + r - 1}$ cyclically with the sequence of colors
  $1, 18, 1, 19, ...\carka 1, 35$ starting at the vertex  $24(r - 1) + r - 2 - 13t$, $24r + r - 1 - 24t$, respectively.
  Hence we have a packing coloring of all $r$ disjoint bands of $D(1, t)$ using the same principle as in the proof of
  Lemma~\ref{lem-2bands}.

  Note that the bands are colored by colors $1, 18, 19, ..., 35$ and the strips are colored by colors $1, 2, ..., 17$
  such that no pair of adjacent vertices is colored with color 1. Then we conclude that we have a packing coloring of
  $D(1, t)$, hence $\chi_{\rho}(D(1, t)) \leq 35$.

  We illustrate this situation on Figure~\ref{fig-oddspiral}. The black vertices are colored by 1 and we color bands
  cyclically
  only with the sequence of colors of length 6 instead of 36 and a strip consists of only 4 bands instead of 24. Note
  that this decomposition is equivalent to our situation.

  \begin{figure}[ht]
   \begin{center}
   \includegraphics[width=12.cm]{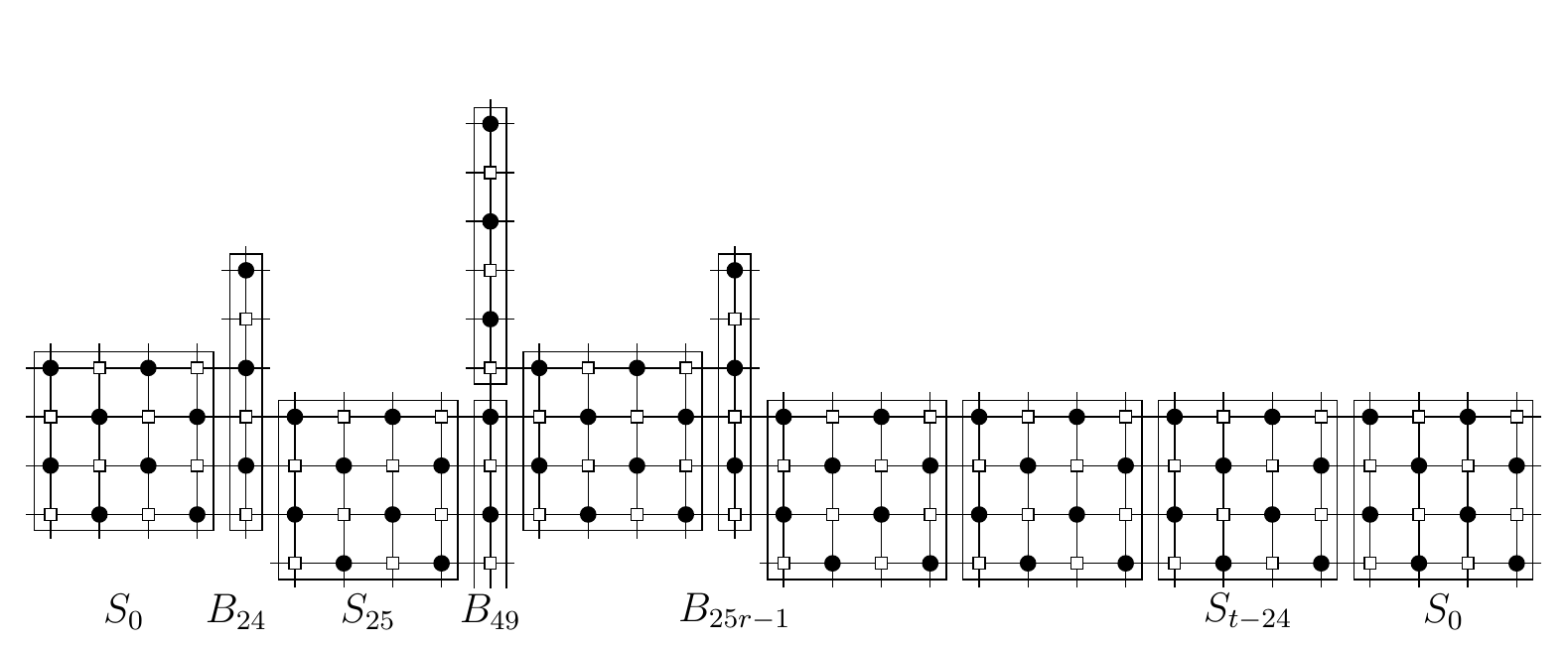}\\
   \end{center}
   \caption{Distance graph $D(1,t)$ for odd $t$.}
   \label{fig-oddspiral}
  \end{figure}

 \medskip

 \emph{Case 2:} $t$ is even.\\
  Let $r, s$ be positive integers such that $t = 24(s + 2) + r$, where $0 < r \leq 24$ is also even. Since $t \geq 648$,
  we get $s \geq r$ (for $r = 24$ we have $24s \geq 576$). Thus we have now $s + 2$ disjoint strips and $r$ disjoint
  bands such that $D(1, t) = S_{0}S_{24}B_{48}S_{49}B_{73}... S_{24(r - 1) + r - 2}B_{24r + r - 2}S_{24r + r - 1}
  S_{24(r + 1) + r - 1}... \zlom S_{24(s + 1) + r - 1}B_{24(s + 2) + r - 1}$.

  For odd $j = 1, 3, ..., r - 1$, we color the strips $S_{0}$, $S_{24j + j - 1}$ cyclically with the pattern on
  24$\times$24 vertices starting at the vertex 0, $24j + j - 1$, respectively. For even $j = 2, ..., r - 2$, we color
  $S_{24j + j - 1}$ cyclically with the pattern on 24$\times$24 vertices starting at the vertex $24j + j - 1 - t$. For
  $j = r, r + 1, ..., s + 2$, we color $S_{24j + r - 1}$ cyclically with the pattern on 24$\times$24 vertices starting at
  the vertex $24j + r - 1 - t$. Hence we have a packing coloring of all $s + 2$ disjoint strips of $D(1, t)$ using the
  same principle as in the proof of Lemma~\ref{Strip}.

  For odd $j = 1, 3, ..., r - 1$, we color the bands $B_{24(j + 1) + j - 1}$ cyclically with the sequence of colors
  $1, 18, 1, 19, ..., 1, 35$ starting at the vertex $24(j + 1) + j - 1$. For even $j = 2, 4, ..., r - 2$, we color
  $B_{24(j + 1) + j - 1}$ cyclically with the sequence of colors $1, 18, 1, 19, ..., 1, 35$ starting at the vertex
  $24(j + 1) + j - 1 - 17t$. We color $B_{24(s + 2) + r - 1}$ with sequence of colors $18, 19, ..., 56$ starting at the
  vertex $24(s + 2) + r - 1$ by Proposition~\ref{Goddard} for $k = 18$. Note the band $B_{24(s + 2) + r - 1}$ is the
  only one with colors greater than 35. We have a packing coloring of all $r$ disjoint bands of $D(1, t)$ by the fact
  that the distance between an arbitrary vertex of $B_{24(s + 2) + r - 1}$ and a vertex of any other band is at least 49
  and using the same principle as in the proof of Lemma~\ref{lem-2bands}.

  Note that the bands are colored by colors $1, 18, 19, ..., 56$ and the strips are colored by colors $1, 2, ..., 17$
  such that no pair of adjacent vertices is colored with color 1. Then we conclude that we have a packing coloring of
  $D(1, t)$, hence $\chi_{\rho}(D(1, t)) \leq 56$.

  We illustrate this situation on Figure~\ref{fig-evenspiral}. Note that this decomposition is equivalent to our
  situation as in Case 1.

  \begin{figure}[ht]
   \begin{center}
   \includegraphics[width=12.cm]{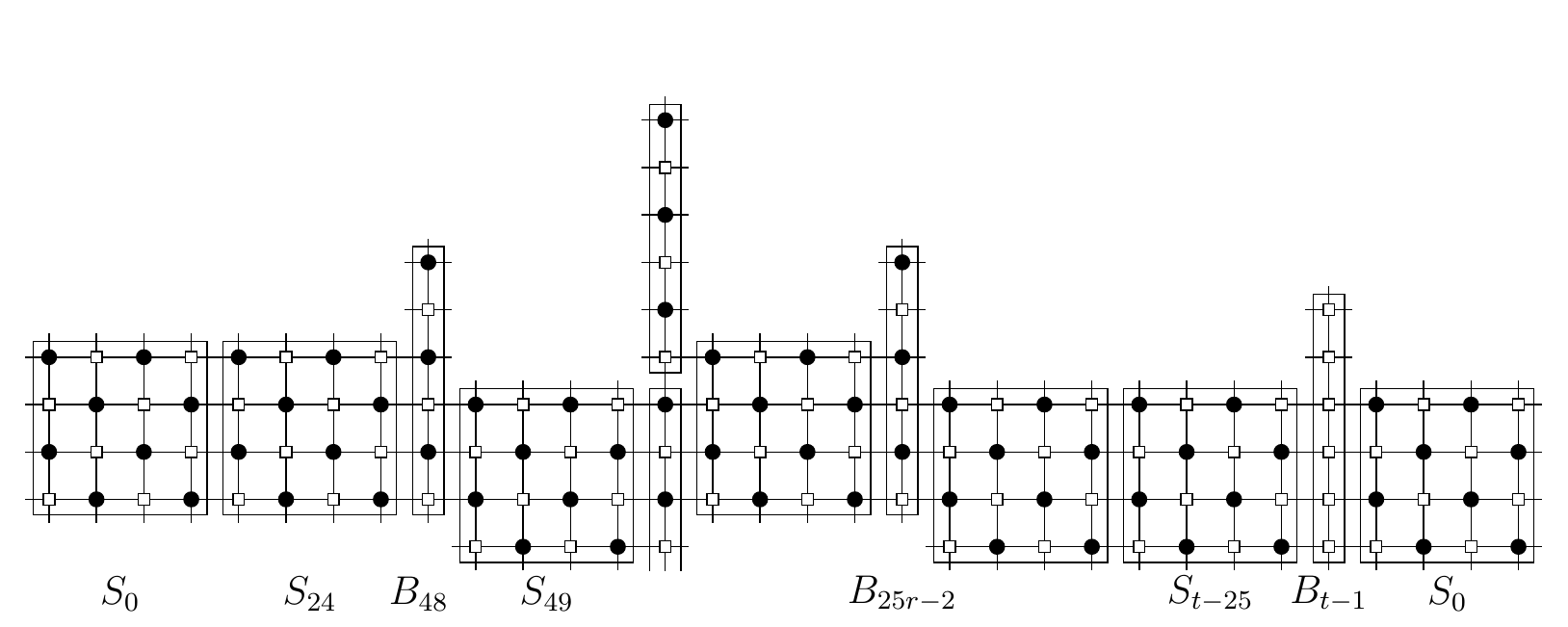}\\
   \end{center}
   \caption{Distance graph $D(1,t)$ for even $t$.}
   \label{fig-evenspiral}
  \end{figure}
\end{proof}

 \begin{table}[ht]
  \begin{center}
  \begin{tabular}{|c|c|c|c|c|}
   \hline
   $r$ & $t \geq$ &   & $r$ & $t \geq$ \\
   \hline
    1 &  25 &   &  2 &  98 \\
    3 &  75 &   &  4 & 148 \\
    5 & 125 &   &  6 & 198 \\
    7 & 175 &   &  8 & 248 \\
    9 & 225 &   & 10 & 298 \\
   11 & 275 &   & 12 & 348 \\
   13 & 325 &   & 14 & 398 \\
   15 & 375 &   & 16 & 448 \\
   17 & 425 &   & 18 & 498 \\
   19 & 475 &   & 20 & 548 \\
   21 & 525 &   & 22 & 598 \\
   23 & 575 &   & 24 & 648 \\
   \hline
 \end{tabular}
 \end{center}
   \caption{Table for $t$ depending on $r$.}
   \label{Ruzne r}
  \end{table}

 Note that in some cases we can decrease $t$ for which Theorem~\ref{DistanceGraph} is true. It depends on $r$ from
 the proof of Theorem~\ref{DistanceGraph}. We have $t \geq 24r + r$ for odd $t$ and $t \geq 24r + r + 48$ for even $t$
 (see Table~\ref{Ruzne r}).

 \section{Lower bound from square lattice}

 In this section we give a proof of the lower bound for $\chi_{\rho}(D(1, t))$.

 \begin{proof}[Proof of Corollary~\ref{Dolni mez}.]
 By the proof of Theorem \ref{CtvercovaMrizka}, a finite square lattice $15 \times 9$ cannot be colored using $11$
 colors. Clearly $D(1,t)$ contains a finite square grid as a subgraph and $t \geq 9$ assures existence of the square
 lattice $15 \times 9$ in $D(1,t)$. Therefore,  $\chi_{\rho}(D(1, t)) \geq 12$ for every $t \geq 9$.
 \end{proof}

 \section{Conclusion}
  We have shown that the packing chromatic number of an infinite distance graph $D(1, t)$ is at least 12 for $t \geq 9$
  and at most 35 for odd $t$ greater or equal than 575 or at most 56 for even $t$ greater or equal than 648. Moreover, we
  have found some smaller values of $t$ for which Theorem \ref{DistanceGraph} holds. The next research in this area can
  be focused on finding better bounds for $D(1, t)$. In particular, obtaining a~lower bound for $D(1,t)$ which would
  exceed the upper bound for the square lattice would be an interesting result.

 \section{Acknowledgment}
  We would like to express our thanks to Ji\v{r}\'{\i} Fiala for comments and fruitful discussion.

  The access to the METACentrum computing facilities, provided under the programme "Projects of Large Infrastructure for
  Research, Development and Innovations" LM2010005 funded by the Ministry of Education, Youth and Sports of the Czech
  Republic, is highly appreciated.


\begin{thebibliography}{10}
  \bibitem{BaSe} J.~Barajas, O.~Serra,
    \emph{Distance graphs with maximum chromatic number}, Dicrete Math. {\bf 308 (8)} (2008) 1355-1365.

  \bibitem{Bon} J.~A.~Bondy, U.~S.~R.~Murty,
   \emph{Graph Theory with Applications}, Macmillan, London and Elsevier (1976).

  \bibitem{Bre} B.~Bre\v{s}ar, S.~Klav\v{z}ar, D.~F.~Rall,
     \emph{On the packing chromatic number of Cartesian products, hexagonal lattice and trees},
     Discrete Appl. Math. {\bf 155} (2007) 2303-2311.

  \bibitem{bib-ekstein10+} J.~Ekstein, J.~Fiala, P.~Holub, B.~Lidick\'{y},
     \emph{The packing chromatic number of the square lattice is at least 12}, arXiv: 1003.2291v1, preprint (2010).

  \bibitem{Eggl} R.~B.~Eggleton, P.~Erd\"{o}s, D.~K.~Skilton,
     \emph{Colouring the real line}, J.~Combin. Theory Ser. B {\bf 39 (1)} (1985) 86-100.

  \bibitem{FiaGol} J.~Fiala, P.~A.~Golovach,
     \emph{Complexity of the packing chromatic problem for trees}, Discrete Appl. Math. {\bf 158} (2010) 771-778.

  \bibitem{bib-fiala09+} J.~Fiala, S.~Klav\v{z}ar, B.~Lidick\'{y},
     \emph{The packing chromatic number of infinite product graphs}, European J. Combin. {\bf 30} (5) (2009) 1101-1113.

  \bibitem{bib-finbow07+} A.~Finbow, D.~F.~Rall,
    \emph{On the packing chromatic number of some lattices}, Discrete Appl. Math. {\bf 158} (2010) 1224-1228.

  \bibitem{God} W.~Goddard, S.~M.~Hedetniemi, S.~T.~Hedetniemi, J.~M.~Harris, D.~F.~Rall,
     \emph{Broadcast chromatic numbers of graphs}, Ars Combin. {\bf 86} (2008) 33-49.

  \bibitem{HolSo} P.~Holub, R.~Soukal,
     \emph{A note on packing chromatic number of the square lattice},
     Electronic Journal of Combinatorics {\bf 17} (2010) Note 17.

  \bibitem{LaaAart} P.~J.~M.~van~Laarhoven, E.~H.~L.~Aarts,
     \emph{Simulated Annealing: Theory and Applications}, Reidel, Dordrecht, Holland (1987).

  \bibitem{Liu} D.~D.-F. Liu,
     \emph{From rainbow to the lonely runner: a survey on coloring parameters of distance graphs},
     Taiwanese J. Math. {\bf 12 (4)} (2008) 851-871.

  \bibitem{LiuZhu} D.~D.-F. Liu, X.~Zhu,
     \emph{Fractional chromatic number of distance graphs generated by two-interval sets},
     European J. Combin. {\bf 29 (7)} (2008) 1733-1743.

  \bibitem{RuzTuVoi} I.~Z.~Ruzsa, Z.~Tuza, M.~Voigt,
     \emph{Distance graphs with finite chromatic number}, J. Combin. Theory Ser. B {\bf 85 (1)} (2002)  181-187.

  \bibitem{bib-sloper04} C.~Sloper,
     \emph{An eccentric coloring of trees}, Australas. J. Combin. {\bf 29} (2004) 309-321.

  \bibitem{Togni} O.~Togni,
    \emph{On Packing Colorings of Distance Graphs}, preprint (2010).

  \bibitem{VoiWal} M.~Voigt, H.~Walther,
    \emph{Chromatic number of prime distance graphs}, Discrete Appl. Math. {\bf 51} (1994) 197-209.

\end{thebibliography}
\end{document}